\documentclass[conference]{IEEEtran}
\ifCLASSINFOpdf
\else
\fi

\usepackage{amsthm}
\usepackage{amsmath}
\usepackage{graphicx,enumitem}
\usepackage{subfigure}
\usepackage[compatible]{algpseudocode}
\usepackage{amssymb}
\usepackage{amsfonts}
\usepackage{algorithm}
\usepackage{algcompatible}
\usepackage{tabularx}
\usepackage{lipsum}

\usepackage[utf8]{inputenc}

\newtheorem{lem}{Lemma}
\usepackage{url}
\usepackage{balance}
\usepackage{epstopdf}
\usepackage{array}
\usepackage{tabularx}
\makeatletter
\newcommand*{\rom}[1]{\expandafter\@slowromancap\romannumeral #1@}
\makeatother

\setlength{\abovedisplayskip}{1ex}
\setlength{\belowdisplayskip}{1ex}



\hyphenation{op-tical net-works semi-conduc-tor}
\begin{document}
%

\title{On Maximizing Task Throughput in IoT-enabled 5G Networks under Latency and Bandwidth Constraints\vspace{-10pt}}

\author{\IEEEauthorblockN{Ajay Pratap, Ragini Gupta, Venkata Sriram Siddhardh Nadendla and Sajal K. Das}
\IEEEauthorblockA{Department of Computer Science, Missouri University of Science and Technology, Rolla, MO 65409 USA\\
E-mail:\{ajaypratapf, rg5rv, nadendla, sdas\}@mst.edu\vspace{-17pt}}}
\maketitle
\begin{abstract}
Fog computing in 5G networks has played a significant role in increasing the number of users in a given network. However, Internet-of-Things (IoT) has driven system designers towards designing heterogeneous networks to support diverse demands (tasks with different priority values) with different latency and data rate constraints. In this paper, our goal is to maximize the total number of tasks served by a heterogeneous network, labeled \emph{task throughput}, in the presence of data rate and latency constraints and device preferences regarding computational needs. Since our original problem is intractable, we propose an efficient solution based on graph-coloring techniques. We demonstrate the effectiveness of our proposed algorithm using numerical results, real-world experiments on a laboratory test-bed and comparing with the state-of-the-art algorithm.

\end{abstract}
\begin{IEEEkeywords}
Resource Allocation; PRB; 5G; IoT; fog; Graph. 
\end{IEEEkeywords}
\vspace{-10pt}
%
\IEEEpeerreviewmaketitle
\section{Introduction \label{sec: Introduction}}
The proliferation of heterogeneous computing devices in everyday life has enabled system designers to implement Internet-of-Things (IoT) and improve services in diverse domains such as healthcare, manufacturing and transportation. Despite the availability of high-performance computing abilities at the base station (BS), the main challenge in designing an IoT network is to provide services seamlessly in the presence of ever-increasing number of edge devices \cite{givehchi2017interoperability}. At the same time, IoT devices are designed to facilitate user mobility which results in limited resources in terms of battery power, bandwidth and computational capacity \cite{samie2016distributed}. Therefore, fog computing has been proposed in 5G networks to reduce the computational load at the BS and support the ever-increasing number of mobile IoT devices \cite{bonomi2012fog}. The IoT devices in fog networks generate tasks via sensing information from a physical phenomenon, and send pre-processed data to a gateway node called \emph{fog access point} (FAP). Upon receiving the data from IoT devices, FAP executes the task and sends the response back to respective IoT devices. 

For example, consider a smart-health IoT network designed to serve stroke patients in a rehabilitation center. While it is necessary to continuously monitor various signals such as blood pressure, heart rate and blood sugar levels in multiple patients, there are other tasks such as fall detection (typically detected using accelerometers, gyroscopes and surveillance cameras) that play a crucial role in the avoidance of accidents during the rehabilitation period. Therefore, tasks such as fall detection take precedence over processing blood sugar readings. At the same time, the latency and bandwidth requirements for video streaming in surveillance cameras are significantly larger than those needed to communicate and process fall detection data. In other words, IoT devices typically generate heterogeneous demands (multi-priority tasks) that require diverse resource requirements (e.g. bandwidth, computational power) in the presence of non-identical latency constraints. In such a scenario, BS should prioritize tasks that need to be served and allocate necessary resources accordingly to different FAPs, via integrating heterogeneous constraints and dynamic network environments \cite{atat2017enabling}. 

\subsubsection*{Related Work}
A few attempts have been made to address similar problems in recent literature. For example, Samie \emph{et al.} have proposed a novel resource management scheme for IoT devices in \cite{samie2016distributed}, where they have reasoned out the need for a discrete number of resources at different stages of operation in the context of a smart health application. Further, authors have proposed a QoS based resource allocation approach for smart-health care application in IoT-enabled networks. Authors have pointed out the necessity of an optimizing resource allocation approach that not only monitors the resource constraints but also keeps latency bound into consideration.  Abedin \emph{et al.} have proposed a matching theory based pairing model for resource sharing in \cite{abedin2015fog} using Irving's stable roommate algorithm. This approach did not consider the latency constraint into the model. On the other hand, Zhang \emph{et al.} have addressed the resource allocation problem using a three-tier solution, which is based on Stackelberg games and matching theory. This approach is not applicable when the IoT devices will have heterogeneous non-uniform latency constraint. In \cite{gu2018joint}, a joint radio and computational resource allocation in the IoT Fog computing model has been studied. Additionally, the authors have proposed student project allocation based matching approach to solving the resource allocation procedure. In \cite{vu2018joint} authors have proposed a joint energy and latency optimization framework for IoT enabled fog access radio. The authors further proposed a knapsack based approach to solve the optimization problem. Latency and non-sharability of limited resources are the main drawbacks of this approach. The authors have assumed the predefined static capacity of each FAP and as soon as the total demand of resources goes higher than the capacity, IoT requests start falling down. Dynamic adaptation and prioritization of different IoT devices over limited available resources is another shortcoming of the existing works.

\subsubsection*{Our Contributions}
In this paper, our goal is to maximize the total number of heterogeneous tasks (a.k.a. task throughput) served by the 5G network in the presence of data rate/latency constraints, along with device preferences regarding computational needs. Given that this problem is NP-Hard, we propose an graph-coloring based algorithm with pseudo-polynomial time complexity, to find an efficient solution. The novelty in our solution approach lies in our system framework where FAP nodes relay the task requests submitted by the IoT devices to the BS, so that the BS can centrally allocate optimal resources. Based on the priority of the tasks and network connectivity, the BS first identifies all the high priority tasks and allocates necessary resources to appropriate FAPs. If there are any residual resources that remain unassigned (or can be reused whenever FAPs are non-interfering with each other), the BS allocates them to serve the low priority tasks. Our solution approach outperforms state-of-the-art algorithm in the literature because it relies on the notion of reuse of resources whenever FAPs are non-interfering with each other. 

The remaining of paper organized as follows. Section \ref{sec: System Model} includes system model. Section \ref{Sec: Problem Formulation} illustrates the problem formulation. Section \ref{Sec: Proposed Solution} presents proposed algorithm. Section \ref{Sec: Performance Study} describes performance measure and Section \ref{Sec: Conclusion} concludes this work. 
\vspace{-15pt}
\section{System Model}\label{sec: System Model}
Consider a network shown in Fig. \ref{Arc}, where $M$ heterogeneous IoT devices request bandwidth and computational resources to BS regarding their respective tasks. Let $\mathbb{I} = \{I_1, \cdots, I_M\}$ denote the different IoT devices in the network. Assume that the $i^{th}$ device $I_i$ generates a task $s_i \in \Psi$, where $\Psi$ represents the set of tasks that the network can execute. 
Assume that there are $K$ fog access points (FAPs) in the network labeled as $\mathbb{F} = \{F_1, \cdots, F_K\}$, which are equipped with CPU cycle rates $\mathbb{C}=\{c_1, \cdots, c_K\}$. In other words, $F_k$ can process a task at the rate of $1/c_k$ computations per unit cycle. In such a case, the goal of the BS is maximize the total number of tasks served, via finding appropriate pairs of IoT devices and FAPs in order to reduce the overall latency in the system, while simultaneously increasing the overall productivity in terms of resource utilization.

Let $\Phi_k$ denote the set of all IoT devices that are within the physical proximity of $F_k$. Note that $\Phi_k$ can also be interpreted as the coverage area of $F_k$. Therefore, it is natural for BS to assign $F_k$ to all the IoT devices within $\Phi_k$ in order to minimize latency. However, it is also possible that the coverage areas of two nearby FAPs can overlap, which leads to interference (consequently, a reduction in the achievable data rate) in the communication between the IoT devices within the overlap region and the corresponding FAPs \cite{peng2016fog}. Furthermore, once the BS matches an IoT device to a FAP, the IoT device shares the task details to the FAP using one or more Physical Resource Blocks (PRBs), which is the smallest unit of communication resource assigned by the BS \cite{etsi2011evolved, zirwas2018sub}. We have considered Orthogonal Frequency Division Multiple Access (OFDMA) model where, a PRB comprises of 180 KHz bandwidth ($\Delta f$) and 0.5 ms time frame \cite{SMC}.


This interplay between IoT devices, FAPs and PRBs can be formally characterized by defining the association within any triplet $(I_i, F_k, n) \in \mathbb{I} \times \mathbb{F} \times \mathbb{N}$ as
\begin{equation}\label{eq1}
    y_{k,i}^n=
    \begin{cases}
      1, & \text{if the $n^{th}$ PRB is assigned to } (I_i, F_k)
      \\[1ex]
      0, & \text{otherwise}.
    \end{cases}
\end{equation}
Interference can occur whenever the following possibilities happen: (i) the same PRB is assigned to two IoT devices $I_i$ and $I_j$ which are within the coverage area of $F_k$, or (ii) the same PRB is assigned to two IoT devices, wherein one of them (say $I_i$) is in coverage areas of both $F_k$ and $F_{k'}$ and the other (say $I_j$) is in the non-overlapping regions of either $F_k$ or $F_{k'}$. In other words, we have the following two conditions:
\begin{enumerate}[label=($C_\arabic*$)]
\item Given $n^{th}$ PRB, we have
$\displaystyle \sum_{i = 1}^M y_{k,i}^n = 1,$
for all $F_k \in \mathbb{F}$.
\item Given $n^{th}$ PRB, we have
$y_{k,i}^n + y_{k',j}^n \leq 1,$
for all $I_i \in \Phi_k \cap \Phi_{k'}$ and $I_j \in \Phi_k - \Phi_{k'}$, for any $F_k$ and $F_{k'}$ in $\mathbb{F}$.
\end{enumerate}
\begin{figure}[!t]
	\centering
	\includegraphics[width=0.38\textwidth]{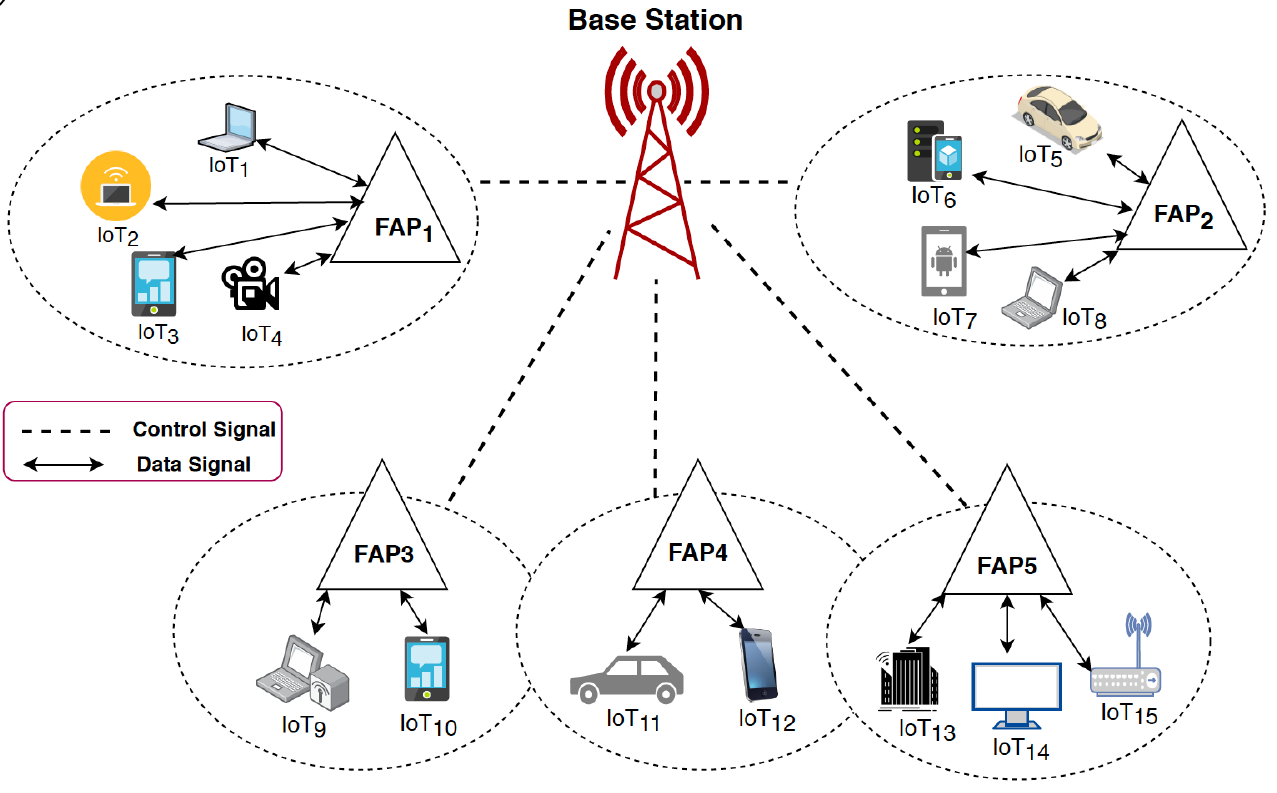}
	\vspace{-2ex}
	\caption{Communication framework for IoT enabled 5G.}
	\vspace{-3ex}
	\label{Arc}
\end{figure}
\subsection{Uplink Latency}
The signal-to-interference-plus-noise ratio (SINR) for a given triplet $(I_i, F_k, n) \in \mathbb{I} \times \mathbb{F} \times \mathbb{N}$ in the uplink channel can be evaluated as, \vspace{-10pt}
\begin{equation}\label{eq4}
\omega_{k,i}^n = \displaystyle \frac{p_i G_{k,i}^n}{\displaystyle \sum_{j \in \mathbb{I}, \ j \neq i} y_{k,j}^n p_{j} G_{k,j}^n + \sigma^2 },
\end{equation}
where $p_i$ is the transmit power employed by $I_i$, $G_{i,k}^n$ is the channel gain between $I_i$ and $F_k$ on the $n^{th}$ PRB, and $\sigma^2$ is the noise variance. However, multiple PRBs are required to increase the data rate for successful communication between any matched pair of IoT devices and FAPs. Let us assume that there are a total of $N$ PRBs $\mathbb{N} = \{ 1, \cdots, N \}$, each with bandwidth $\Delta f$. Let $\mathbb{D}_i \in \mathbb{N}$ be the subset of PRBs assigned to $I_i$ by the BS. Then, the resulting sum-rate in the uplink channel can be computed as,
\begin{equation}\label{eq5}
\Omega_{k,i}(\boldsymbol{y}_{k,i}) = \displaystyle \sum_{n \in \mathbb{D}_i} y_{k,i}^n \cdot \Delta f \cdot \log(1 + \omega _{k,i}^n),
\end{equation}
where $\boldsymbol{y}_{k,i} = \{ y_{k,i}^1, \cdots, y_{k,i}^N \}$ is the vector of PRB allocations made by the BS to the pair $(I_i, F_k)$.

In this paper, we assume that the BS leases the subset of PRBs $\mathbb{D}_i \in \mathbb{N}$ to the pair $(I_i, F_k)$ until the completion of entire process. If $u_{s_i}$ be the size (in terms of bits) of the message that $I_i$ communicates to its assigned FAP, then the total time taken to communicate a task $s_i$ to a FAP is given by,
\begin{equation}\label{eq6}
\alpha_{k,i}(\boldsymbol{y}_{k,i}) = \displaystyle \frac{u_{s_i}}{\Omega_{k,i}(\boldsymbol{y}_{k,i})}.
\end{equation}
\subsection{Execution Latency}
Let $x_{k,i}$ denote the BS's match between the IoT device $I_i$ with the FAP $F_k$, i.e.,\vspace{-10pt} 
\begin{equation}\label{eq7}
x_{k,i}(\boldsymbol{y}_{k,i}) =
\begin{cases}
1, & \text{if } \displaystyle \sum_{n \in \mathbb{N}} y_{k,i}^n \neq 0 
\\
- \infty, & \text{otherwise}.
\end{cases}
\end{equation}

Given that the network can execute any given task in $\Psi$, let us assume that any FAP in the network can execute a specific task $s \in \Psi$ in a total of $\lambda_s$ CPU cycles. Therefore, total time taken to execute a task $s_i$ generated by $I_i$ at $F_k$ is given by,
\begin{equation}\label{eq8}
\beta_{k,i}(\boldsymbol{y}_{k,i}) = \displaystyle | x_{k,i}(\boldsymbol{y}_{k,i}) | \cdot \lambda_{s_i} \cdot \frac{1}{c_k}.
\end{equation}
\subsection{Downlink Latency}
The SINR for a given triplet $(I_i, F_k, n) \in \mathbb{I} \times \mathbb{F} \times \mathbb{N}$ on the downlink channel can be evaluated as,
\begin{equation}\label{eq9}
\theta_{k,i}^n = \displaystyle \frac{\pi_k G_{k,i}^n}{\displaystyle \sum_{k' \in \mathbb{F}, \ k' \neq k} y_{k',i}^n \pi_{k'} G_{k',i}^n + \sigma^2 },
\end{equation}
where $\pi_k$ is the transmit power employed by $F_k$. Since $\mathbb{D}_i \in \mathbb{N}$ denote the subset of PRBs assigned to $I_i$ by the BS, the resulting sum-rate in downlink channel can be computed as,
\begin{equation}\label{eq10}
\Theta_{k,i}(\boldsymbol{y}_{k,i}) = \displaystyle \sum_{n \in \mathbb{D}_i} y_{k,i}^n \cdot \Delta f \cdot \log(1 + \theta_{k,i}^n).
\end{equation}

If $v_{s_i}$ be the size (in terms of bits) of the message that $I_i$ communicates to its assigned FAP. Then, the total time taken to communicate the task $s_i$ to the FAP is given by,
\begin{equation}\label{eq11}
\gamma_{k,i}(\boldsymbol{y}_{k,i}) = \displaystyle \frac{v_{s_i}}{\Theta_{k,i}(\boldsymbol{y}_{k,i})}.
\end{equation}
\subsection{Deadlines and Priorities}

Let $\tau_i$ denote the execution deadline to accomplish a task $s_i$. In other words, we need
\begin{equation}\label{eq12}
\alpha(\boldsymbol{y}_{k,i}) + \beta(\boldsymbol{y}_{k,i}) + \gamma(\boldsymbol{y}_{k,i}) \leq \tau_i.
\end{equation}
We also assume that a given task should at most be assigned to one FAP, in order to avoid computational redundancy within the network. In other words, we need to ensure
\begin{equation}\label{eq13}
\displaystyle \sum_{i \in \mathbb{I}} x_{k,i}(\boldsymbol{y}_{k,i}) \leq 1.
\end{equation}

Furthermore, note that the BS prioritizes its assignment based on the priority values augmented by the IoT devices to their task requests. We denote the priority value declared by $I_i$ regarding its task $s_i$ using a binary variable $w_i \in \{ 0,1 \}$, where $w_i = 0$ corresponds to the lower priority weight and $w_i = 1$ corresponds to the higher priority weight.
\section{Problem Formulation}\label{Sec: Problem Formulation}
Our goal is to maximize the task throughput, i.e. the total number of tasks executed in the network, which is given by,
\begin{equation}\label{eq14}
\eta(Y) = \displaystyle \sum_{k \in \mathbb{F}} \sum_{i \in \mathbb{I}} x_{k,i}(\boldsymbol{y}_{k,i}).
\end{equation}
However, we also want to ensure that the high-priority tasks are given the largest number of PRBs, which is denoted as $d$. The BS can assign the remaining resources to low priority tasks, where each task can at most have $d$ PRBs. This can be formulated as the following problem statement:

\begin{equation}\label{Prob: Original}
\begin{array}{ll}
\displaystyle \max_{Y} & \eta(Y)
\\[2ex]
\text{subject to } & \text{1. } \alpha(\boldsymbol{y}_{k,i}) + \beta(\boldsymbol{y}_{k,i}) + \gamma(\boldsymbol{y}_{k,i}) \leq \tau_i,
\\
& \qquad \qquad \qquad \qquad \qquad \qquad \quad \text{ for all } i \in \mathbb{I} 
\\[1ex]
& \text{2. } \displaystyle \sum_{i \in \mathbb{I}} x_{k,i}(\boldsymbol{y}_{k,i}) \leq 1,
\\[4ex]
& \text{3. } \displaystyle y_{k,i}^n + y_{k',j}^n \leq 1, 
\\
& \qquad \qquad \qquad \text{for all } j \in Int_i, k' \in k \cup Int_k 
\\[2ex]
& \text{4. } \displaystyle y_{k,i}^n \in \{0,1\}, x_{c_k,i} \in \{0,1\}
\\[3ex]
& \text{5. } \displaystyle \sum_{n \in \mathbb{N}} w_i y_{k,i}^n  = d_i, \text{ for all $i \in \mathbb{I}$,}
\\[4ex]
& \text{6. } \displaystyle \sum_{n \in \mathbb{N}} y_{k,i}^n \leq d_i, \text{ for all $i \in \mathbb{I}$,}
\end{array}
\tag{P1}
\end{equation}
where Constraint 5 enforces the BS to allocate exactly $d$ number of PRBs for each high-priority task, i.e., 
\begin{center}
$|\mathbb{D}_i| = d_i$, for all $i \in \mathbb{I}$ such that $w_i = 1$, 
\end{center}
and similarly, Constraint 6 enforces the BS to allocate at most $d$ number of PRBs for all tasks, i.e., 
\begin{center}
$|\mathbb{D}_i| \leq d_i$, for all $i \in \mathbb{I}$. 
\end{center}

Note that the problem is formulated to facilitate resource reuse at the BS, especially when any assignment does not cause any interference to others. Furthermore, note that each of the above stages consists of a non-linear binary program, which is generally a NP-hard problem \cite{bertsimas1997introduction}. To reduce the computational complexity, we propose an efficient algorithm based on graph coloring in the following section.

\section{Proposed Solution}\label{Sec: Proposed Solution} 
Before we present our solution to Problem \eqref{Prob: Original}, we first represent the interference graph construction. Using this graphical representation of interference within the network, we design an algorithmic solution to our resource allocation problem based on graph-coloring approach. 
\subsection{Interference Graph Construction} 

As discussed in above Section \ref{Sec: Problem Formulation}, IoT devices belonging to interfering FAP should not be assigned with the same PRBs. In order to avoid the interference constraint we construct interference graph as follows. We assumed that BS will construct a graph showing the interference scenarios among different FAPs. A FAP considers the other FAP as its neighbour if strength of control packet transmitted from other FAP is more than a predefined threshold \footnote{Defining the appropriate threshold and designing the control packet are out of scope of this paper.}. We assumed that each FAP can identify its neighboring FAPs and they report to BS, and BS maintains a interference graph accordingly as shown in Fig. \ref{Topo}. Let interference graph be $G(V,E)$, where, vertex $V$ represents set of FAPs and edge $e_{k,k'} \in E$ represents interference relation between any two FAPs $F_k$ and $F_{k'}$ such as defined below:     
\begin{equation}\label{eq18}
e_{k,k'} =
\begin{cases}
1, & \text{if FAP $F_k$ is a neighbor FAP
of FAP $F_{k'}$} 
\\
0, & \text{otherwise}.
\end{cases}
\end{equation}
\begin{figure}[h!]
\vspace{-10pt}
	\centering
	\includegraphics[width=0.35\textwidth]{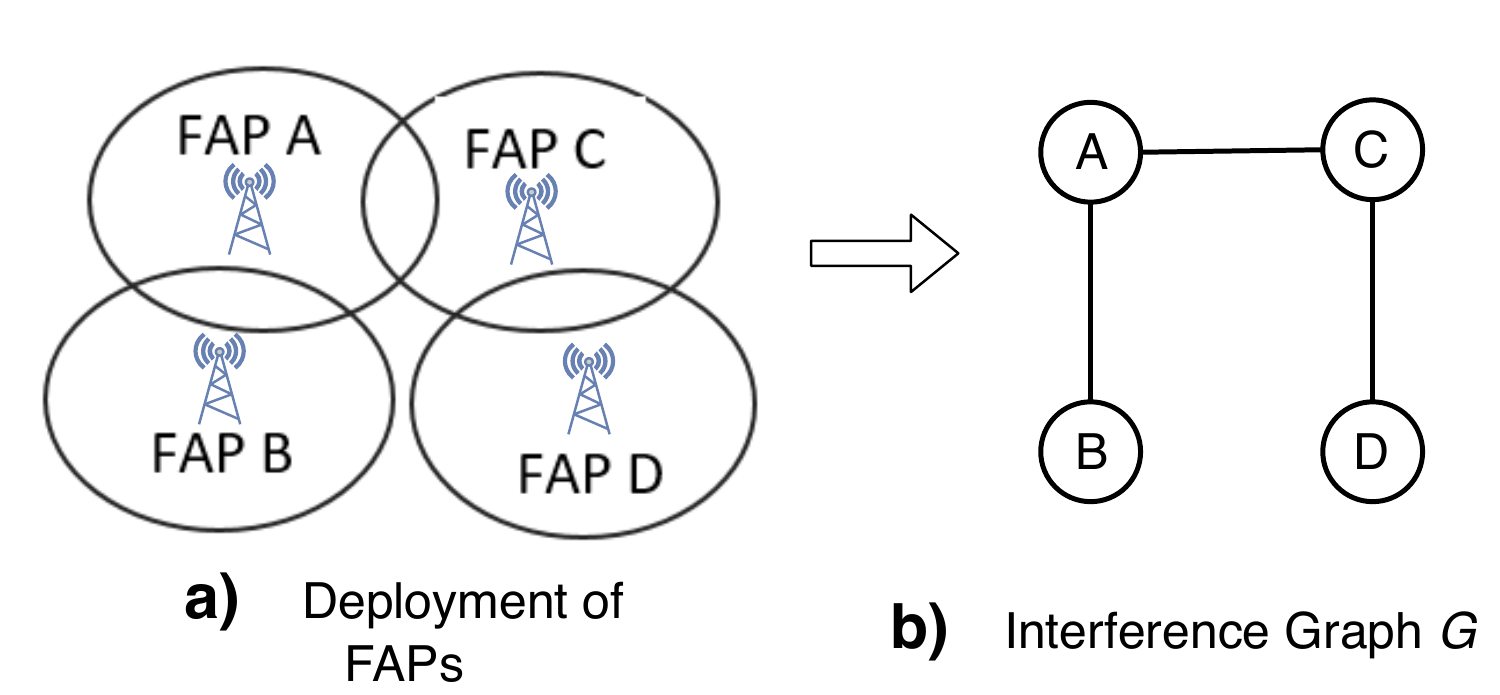}
	\vspace{-2ex}
	\caption{Interference graph.}
	\label{Topo}
	\vspace{-2ex}
\end{figure}
We also assumed that neighboring relation between two FAP is symmetric i.e., $e_{k,k'}=e_{k',k}$. Thus, in order to satisfy the conditions given in Conditions (C1) and (C2), we avoid the same PRB allocation to any two neighboring FAPs. 

\subsection{Proposed Algorithm} We design the graph coloring based resource allocation algorithm in order to maximize the number of high priority IoT devices and at the same time provide the required number of PRBs to low priority IoT devices as given in formulated Problem P1. Let's say the total number of required PRBs of high priority IoT devices as minimum requirement of a FAP whereas, the total number of required PRB's demand including high priority and low priority IoT devices as maximum quota of FAP in-order to achieve the required transmission rate between IoT and respective FAP. On the other hand the execution latency depends on the CPU cycle rate of FAP. For the sake of simplicity, we assume each co-CPU task will be allocated an equal share of the total CPU rate of respective FAP. To fulfill the minimum requirement and maximum quota of FAPs, we consider that BS will first assign the minimum required PRBs of FAPs and then try to fulfill the maximum quota of respective FAPs. Let, $D_k^{min}=\sum_{i \in \mathbb{I}\& w_i=1}d_i$ and $D_k^{max}=\sum_{i \in \mathbb{I}\& w_i=\{0,1\}}d_i$ be the minimum requirement and maximum quota of PRBs of a particular FAP $F_k$, respectively. Consequently, we can write $D_k^{min}=\sum_{i \in \mathbb{I}\& w_i=1}\sum_{n\in \mathbb{N}}y_{k,i}^n$ and $D_k^{max}=\sum_{i \in \mathbb{I}\& w_i=\{0,1\}}\sum_{n\in \mathbb{N}}y_{k,i}^n$ as the minimum requirement and maximum quota of a FAP $k$. Thus, each FAP divided into two dummy FAPs dubbed as steady FAP and elaborate FAP. The steady FAP keeps track of minimum requirement whereas elaborated FAP undertakes the remaining quota of device. Let $k^m$ and $k^q$ represent the steady and elaborated FAP of original FAP $F_k$, respectively. The maximum quota of steady and elaborated FAP can be written as $D_{k^m}^{max}=D_{k}^{min}$ and $D_{k^q}^{max}=D_{k}^{max}-D_{k}^{min}$, respectively. Thus, there is no minimum requirement of all dummy FAPs. If we reserve the $\sum_{k} D_k^{min}$ PRBs for steady FAPs means, minimum requirement of original FAP is fulfilled and the remaining resources $R=N-\sum_{k} D_k^{min}$ will be assigned to the elaborated FAPs. Moreover, to find out how many number of PRBs are required for steady FAPs is not so trivial. For example, let there be 10 FAPs with minimum PRB requirement of 2 and none of them interfere with each other. Thus, instead of reserving 20 resources, we can just keep 2 resources and by reusing it, the FAPs can fulfill their minimum requirement. In order to sort out this problem, let $R$ resources are assigned to elaborated FAPs and $N-R$ resources are fixed for steady FAPs. Thus, we first assign the resources to steady FAPs and then remaining $R$ resources can be assigned to the elaborated FAPs. 

\subsubsection{Minimum Resource Requirement} 
\begin{algorithm}[h!]                    
	\caption{Minimum Resource Requirement Algorithm}\label{Algb}          
	\label{findme}                          
	\begin{algorithmic} [1]                  
		\item[]	 \textbf{Input:} Minimum required resources $D_k^{min}, \forall F_k \in \mathbb{F}$, interference graph $G$, FAP set $\mathbb{F}$, PRB set $\mathbb{N}$, a set $L:=\phi$  
		\item[]	 \textbf{Output:} Number of resources $R$ for elaborated FAPs.
		\STATE \textbf{for} all node $k$ in $G$ do
		\STATE \hspace{0.25cm} Create clique of size $D_k^{min}$.
		\STATE \hspace{0.25cm} Each node of clique $D_k^{min}$ inherits all the edges of $k$ $\in G$ and a new extended interference graph $G'(V',E')$ is generated corresponding to steady FAPs requirements. 
		\STATE \textbf{end for} 
		\STATE \textbf{while} $V'$ is not empty do
		\STATE \hspace{0.25cm} Remove a random node $v_k$ from $V'$
		\STATE \hspace{0.25cm} Assign the lowest index resource $n\in \mathbb{N}$, that is not already assign to any one hop neighboring node of $v_k$.
		\STATE \hspace{0.25cm} $L=L \cup \{n\}$
		\STATE \textbf{end while}
		\STATE $R=\mathbb{N}-{L}$
	\end{algorithmic}
\end{algorithm}
In Algorithm \ref{Algb}, we estimate the number of reserved resources for steady FAPs. If each node requests for one resource then obtaining the number of reserved resources is equivalent to graph coloring problem. Moreover, to verify that a given graph is colorable with $\kappa$ number of colors or not is equivalent to $\kappa$-coloring problem and this is well known as NP-complete problem. As the minimum requirement of each FAP could be more than one PRBs, so we create virtual clique of size $D^{min}$ for each FAP and all nodes in clique share the interference relation of original graph $G$. The new derived graph is considered as $G'(V',E')$. We applied the greedy graph coloring technique to find out the minimum number of required PRBs. In order to do so, we select a random vertex $v_k$ from set $V'$ and assign the smallest indexed PRB that is not allocated to any one hop neighboring nodes of node $v_k$ (lines 5-7). The assigned PRB is updated in set $L$ (line 8). The above process continues for all vertices in $V'$. Now the set $L$ contains all the PRBs required for steady FAPs and $R$ is the remaining set of PRBs that can be assigned to the elaborated FAPs set.

In the following, we first give an algorithm for transformation of  interference graph with minimum requirement into the resource allocation without minimum requirement followed by the resource allocation algorithm. 

\subsubsection{Interference Graph Transformation }
\begin{algorithm}                    
	\caption{Interference Graph Transformation Algorithm}\label{Alg1}          
	\label{findme}                          
	\begin{algorithmic} [1]                  
		\item[]	 \textbf{Input:} Quota of resources $D_k^{min}, D_k^{max}$ \ $\forall F_k \in \mathbb{F}$, interference graph $G(V,E)$, PRB set $\mathbb{N}$

		\item[]	 \textbf{Output:} FAP set $\breve{\mathbb{F}}$, PRB set $\breve{\mathbb{N}}$, maximum quota of FAPs $\breve{D}_k^{max}, \forall k \in \breve{\mathbb{F}}$, derived interference graph $\breve{G}$.  
		
		\STATE $\breve{\mathbb{F}}^s=\phi,\breve{\mathbb{F}}^e=\phi ,\breve{\mathbb{N}}=\mathbb{N}$.		
		\STATE \textbf{for} all $F_k\in \mathbb{F}$ do
		\STATE \hspace{0.25cm} $\breve{\mathbb{F}}^m=\breve{\mathbb{F}}^m \cup \{k^m\} ,\breve{\mathbb{F}}^q= \breve{\mathbb{F}}^q \cup \{k^q\}$
		\STATE \hspace{0.25cm} $D_{k^{m}}^{max}=D_{k}^{min}$, $D_{k^{q}}^{max}=D_{k}^{max}-D_{k}^{min}$. 
		\STATE \textbf{end for}
			\STATE  \textbf{for} all $k\in G$ do
			\STATE \hspace{0.45cm} Add $k^m$ and $k^q$ to $\breve{G}$.
			\STATE \hspace{0.45cm} $k^m$ and $k^q$ inherits all the edges of $k$ in $G$.
			\STATE \hspace{0.45cm} Create an edge between $k^m$ and $k^q$.
		\STATE \textbf{end for}
	\end{algorithmic}
\end{algorithm}

Algorithm \ref{Alg1} shows the procedure for interference graph transformation into study and elaborated nodes. The number of resources $\mathbb{N}$ remain unchanged, list of steady and elaborated FAPs are initialized as null (line 1). The newly generated interference graph $\breve{G}$ contains the steady and elaborated nodes for each FAP in $G$. Steady and elaborated FAPs inherit all the edges of initial interference graph $G$ along with an edge between them because these two nodes can not share the same PRB, in-fact they represent the same FAP (lines 6-10). We use the transformed graph as an input of resource allocation algorithm procedure given in Algorithm \ref{Alg2}.

 \begin{algorithm}[h!]                    
	\caption{Resource Allocation Algorithm}\label{Alg2}          
	\label{findme}                          
	\begin{algorithmic} [1]                  
		\item[]	 \textbf{Input:} Maximum quota of FAPs $\breve{D}_k^{max}, \forall k \in \breve{\mathbb{F}}$, transformed interference graph $\breve{G}$.  
		\item[]	 \textbf{Output:} Set of PRBs allocated nodes i.e., $\breve{\mu}(k)$, $\forall k \in \breve{\mathbb{F}}$.
		
		\STATE $\forall k \in \breve{\mathbb{F}}, \breve{\mu}(k)=\phi$, waiting list $W_k=\phi$. 
		\STATE $\forall n \in \breve{\mathbb{N}}, \breve{\mu}(n)=\phi$, candidate list $A_n=\breve{\mathbb{F}}$.		
		\STATE \textbf {while} $\exists A_n \neq \phi$ do
		\STATE \hspace{0.20cm} \textbf{for} all resource $n$ with $|A_n|>0$ do
		\STATE \hspace{0.40cm} $Z_n:=$ FAPs those satisfy $k\in A_n, \forall k' \in \mu(n), e_{k,k'}=0.$
		\STATE \hspace{0.40cm} Find the maximum weighted independent set on $Z_n$ as $Z_n^{max}$.
		\STATE \hspace{0.20cm} \textbf{end for}
		\STATE \hspace{0.20cm} \textbf{if} $\forall n,Z_n^{max}=\phi$ then 
		\STATE  \hspace{0.40cm} Return $\breve{\mu}$
		\STATE \hspace{0.20cm} \textbf{else}
		\STATE \hspace{0.40cm} \textbf{for} all FAP $k\in Z_n^{max}$ do
		\STATE \hspace{0.60cm} BS sends $msg<k,n>$ 
		\STATE \hspace{0.60cm} $A_n=A_n-\{k\}$
		\STATE \hspace{0.40cm} \textbf{end for}
		\STATE \hspace{0.40cm} Upon deceive $msg<k,n>$ on FAP $k$
		\STATE \hspace{0.60cm} FAP $k$ updates its waiting list $W_k=W_k \cup \{n\}$.
		\STATE \hspace{0.20cm} \textbf{end if}
		\item[]
		\STATE \hspace{0.20cm} \textbf{for} all FAP $k$ for which $W_k\neq \phi$ do
		\STATE \hspace{0.40cm} \textbf{if} $k\in \breve{\mathbb{F}}^m$ then
		\STATE \hspace{0.60cm} FAP $k$ accepts $D_{k^{m}}^{max}$ resources from $W_k \cup \breve{\mu}(k)$ and reset $W_k=\phi$.
		\STATE \hspace{0.40cm} \textbf{else}  
		\STATE \hspace{0.60cm} $count=0$, Index $l=1$
		\STATE \hspace{0.60cm} \textbf{for} all $k \in \breve{\mathbb{F}}^q$ do
		\STATE \hspace{0.75cm}  $W_k=W_k \cup \mu(k), \mu(k)=\phi$
		\STATE \hspace{0.60cm} \textbf{end for}
		\STATE \hspace{0.60cm} \textbf{while} $cnt <  R$ and $\exists k^q, W_{k^q}\neq \phi, |\breve{\mu}(k^q)|< \breve{D}_{k^q}^{max}$ do
		\STATE \hspace{0.75cm} \textbf{if} $ |\mu(k_l^q)|< \breve{D}_{k_l^q}$ and $W_{k_l^q}\neq \phi$, then
		\STATE \hspace{0.85cm} FAP $k_l^q$ selects PRB $n$ from set $W_{k_l^q}$, such that $\breve{\mu}(k_l^q)|=\breve{\mu}(k_l^q)| \cup {n}$. 
		\STATE \hspace{0.85cm} $W_{k_l^q}=W_{k_l^q}-\{n\}$
		\STATE \hspace{0.85cm} $count=count+1$
		\STATE \hspace{0.75cm} \textbf{end if}
		\STATE \hspace{0.60cm} $l=l+1$ 
		\STATE \hspace{0.60cm} \textbf{end while}
		\STATE \hspace{0.40cm} \textbf{end if}
		\STATE \hspace{0.40cm} $W_k=\phi$
		\STATE \hspace{0.20cm} \textbf{end for}
		\STATE \textbf{end while}
	\end{algorithmic}
\end{algorithm}
\subsubsection{Resource Allocation Algorithm}   

Given the input of transformed FAPs and resources, the Algorithm \ref{Alg2} shows the steps of resource allocation procedure. We have introduced different rules for steady and elaborated FAPs in-order to fulfill the minimum and maximum required PRBs. Let $A_n$ be the set of FAP that the BS has not tried for PRB $n$ allocation. In each round BS selects the set of non-interfering FAPs for PRB $n$ (Line 5). In the other-word, BS finds the maximum weighted independent set among FAPs based on priority and $Z_n^{max}$ is the best set of FAPs that BS can send message for PRB $n$ in the current round. If for all the PRBs $n \in \breve{\mathbb{N}}$ the set $Z_n^{max}$ become empty then Algorithm \ref{Alg2} terminates (lines 8-9). The BS sends message $msg<k,n>$ to all the $k \in Z_n^{max}$ for PRB $n$ and update candidat list $A_n$. Upon receiving message $msg<k,n>$ the FAP $k$ adds PRB $n$ into its waiting list (lines 15-16).

We have considered different rules for steady and elaborated FAPs to select the valid PRBs. A steady FAP $k$ for which waiting list $W_k$ is not empty, selects its required $D_{k^{m}}^{max}$ resources, and reset its waiting list (lines 18-20). For the elaborated FAPs, we follow the following method. Let, $count$ be the counter to estimate the number of resources allocated to all elaborated FAPs and $l$ is the index of elaborated FAP initialize to 1 (line 22). We first put all the allocated PRBs of elaborated FAPs into waiting list and then we sequentially check the elaborated FAP into a specific order for PRB allocation. Let index $l$ be the index of FAP that is being considered.  

If the number of allocated PRBs to elaborated FAP becomes $R$ (i.e., the same as maximum PRBs for all elaborated FAPs), or fulfill the maximum quota or empty waiting list, then the algorithm returns to line 18. Otherwise, FAP $k_l^q$ selects the PRB from its waiting list as long as its maximum quota is not fulfilled and its waiting list is not empty.          

\subsection{Computational Complexity Analysis} 
Interference graph transformation procedure given in Algorithm \ref{Alg1} takes $O(K+N)$ computational complexity. To observe the number of fixed resources $R$ for elaborated FAPs as given in Algorithm \ref{Algb} has to take $O(K)$. This is because, it goes to all the FAPs to decide the reserved resources. In order to estimate the computational complexity of Algorithm \ref{Alg2}, we first compute the computational complexity of steady FAPs then for elaborated FAPs. As every time whenever the BS sends message to a FAP, it removes that FAP from its candidate list (as shown in line 4 of Algorithm \ref{Alg2}). At the end, for each resource, the set of interference free FAP assigned with the same resource. As we have $N$ PRBs, each time BS selects FAP by looking into the maximum weighted independent graph. Thus, the computational time complexity to allocate the valid resources to each steady FAP be $O(KN\rho)$ where, $\rho$ is time complexity for finding MWIS in a graph. When, the Algorithm \ref{Alg2} assigns the resource to a elaborated FAPs (lines 26-36), all the elaborated FAPs needs to be traverse and that may take $O(K)$ computational time complexity. Thus, the proposed resource allocation method takes total $O(NK^2\rho)$ computational time complexity.

\subsection{Performance Bound of Proposed Algorithm} 
In this subsection, we compute the number of PRB bound over the propagation latency of our proposed scheme. Let $\mathbb{H}$ and $\mathbb{L}$ be the number of higher priority and lower priority IoT devices associated with each FAP, respectively. Assuming that each IoT device has to send 504 bit of data within propagation delay of 0.5 ms and BS operates at 64 QAM, then there is a need of at least one PRB for an IoT device \cite{hatoum2014cluster}, \cite{SMC}. If $Q$ and $\Delta$ represent clique and degree of interference graph $G(V, E)$, respectively. Then we bound the number of required PRBs in the network as follows: 
\begin{lem}\label{lem1}
If propagation latency of IoT devices carrying 504 bits of data is bounded by the 0.5 ms then our propose algorithm bounds over the required number of PRBs $\Gamma$ in the network as $\mathbb{H}Q\leq \Gamma \leq (\mathbb{H}+\mathbb{L})(\Delta+1)$. \end{lem}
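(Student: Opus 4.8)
The plan is to prove the two inequalities separately: the lower bound $\mathbb{H}Q\le\Gamma$ as a clique/packing obstruction coming from Conditions $(C_1)$--$(C_2)$, and the upper bound $\Gamma\le(\mathbb{H}+\mathbb{L})(\Delta+1)$ as a greedy graph-coloring guarantee on the clique-expanded interference graph built in Algorithm~\ref{Algb} (and Algorithm~\ref{Alg1}).

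For the lower bound, I would first invoke the standing hypothesis of the lemma: a device carrying $504$ bits inside the $0.5$ ms frame at $64$-QAM requires at least one PRB, so every one of the $\mathbb{H}$ high-priority devices attached to a FAP consumes at least one PRB, and by Condition $(C_1)$ these $\mathbb{H}$ PRBs must be pairwise distinct within that FAP (hence $D_k^{min}\ge\mathbb{H}$). Next I would take a maximum clique of $Q$ mutually interfering FAPs in $G(V,E)$; by Condition $(C_2)$, equivalently Constraint~3 of \eqref{Prob: Original}, no PRB can be reused across an edge of $G$, so the high-priority PRB sets of these $Q$ FAPs are mutually disjoint. Counting $\mathbb{H}$ distinct PRBs per FAP over $Q$ FAPs forces at least $\mathbb{H}Q$ PRBs just to meet the minimum requirement of that clique, giving $\Gamma\ge\mathbb{H}Q$.

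For the upper bound, I would analyze the expanded graph $G'$ obtained by replacing each FAP $F_k$ with a clique on its total demand $D_k\le\mathbb{H}+\mathbb{L}$ vertices, each such vertex inheriting all edges incident to $F_k$ (lines 2--3 of Algorithm~\ref{Algb}). The key estimate is the maximum degree of $G'$: a vertex in $F_k$'s clique has at most $D_k-1\le\mathbb{H}+\mathbb{L}-1$ neighbors inside its own clique, and since $F_k$ has at most $\Delta$ neighbors in $G$, each expanded into a clique of at most $\mathbb{H}+\mathbb{L}$ vertices, it has at most $\Delta(\mathbb{H}+\mathbb{L})$ neighbors outside; hence $\deg_{G'}\le(\mathbb{H}+\mathbb{L})(\Delta+1)-1$. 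Then I would apply the classical fact that the ``smallest index not used by an already-colored neighbor'' rule --- which is exactly the PRB assignment rule in lines 5--7 of Algorithm~\ref{Algb} --- colors any graph with at most $(\text{maxdegree}+1)$ colors, so at most $(\mathbb{H}+\mathbb{L})(\Delta+1)$ PRBs are ever used. Finally I would note that the elaborated-FAP phase of Algorithm~\ref{Alg2} only redistributes PRBs drawn from the same pool $\mathbb{N}$ (the $R=N-\sum_k D_k^{min}$ residual ones), so it cannot increase the count, yielding $\Gamma\le(\mathbb{H}+\mathbb{L})(\Delta+1)$.

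The main obstacle I anticipate is the upper-bound side, specifically being precise that the clique expansion does not inflate the degree past the claimed value --- so the bound $D_k\le\mathbb{H}+\mathbb{L}$ must be applied uniformly over all FAPs, and ``associated with each FAP'' must be read as the relevant per-FAP counts $\mathbb{H},\mathbb{L}$ --- and that the two-phase scheme (steady/elaborated split of Algorithm~\ref{Alg1} with its extra $k^m$--$k^q$ edges, plus the maximum-weighted-independent-set selection of Algorithm~\ref{Alg2}) still embeds into a single greedy coloring of $G'$ and therefore inherits the $(\Delta+1)$-type guarantee rather than doubling it. A secondary subtlety is the ``one PRB is both necessary and sufficient at $64$-QAM'' rounding of the rate/latency relation, which I would simply keep as the hypothesis already granted in the statement.
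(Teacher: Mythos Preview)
Your proposal is correct and follows essentially the same approach as the paper: a clique obstruction for the lower bound and a greedy $(\Delta+1)$-type coloring bound for the upper bound. The only difference is presentational---the paper simply cites the classical $\Delta+1$ greedy bound on $G$ and multiplies by the per-FAP demand $\mathbb{H}+\mathbb{L}$, whereas you carry out the equivalent step by explicitly bounding the maximum degree of the clique-expanded graph $G'$ and then applying the greedy bound there; your version is more detailed but not a different idea.
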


 
\begin{proof}
As the interference graph has a clique of size $Q$ and each node has minimum PRB demand of $\mathbb{H}$, consequently to fulfill the minimum PRB demand of each FAP, the number of required PRB $\Gamma$ must follow the lower bound constraint such as $Q\mathbb{H}\leq \Gamma$. On the other hand, to prove the upper bound, we borrow the degree concept of graph coloring procedure. Based on graph coloring concept we can say the number of required colors are bounded as $(\Delta+1)$ if each node is labeled with a single color \cite{welsh1967upper}. Accordingly, we can write the upper bound over the required number of PRBs, when each node is assigned with a maximum of $(\mathbb{H}+\mathbb{L})$ number of PRBs. Thus, we can write the upper bound for required number of PRBs such as $\Gamma \leq (\mathbb{H}+\mathbb{L})(\Delta+1)$. Combining both the upper and lower bounds, we obtain the result stated in Lemma \ref{lem1}.            
\end{proof}


\section{Performance Study}\label{Sec: Performance Study} 
In this section, we evaluate our proposed method based on the following environments. We have assumed that the set of FAPs and IoT devices are deployed randomly in the network area of $500$ m x $500$ m underlying a cellular BS. The bandwidth is considered as 20 MHz accordingly, the maximum number of available PRBs found as 100 \cite{chiu2016ultra}. We have assumed that FAP with a radius of 20 m, randomly deployed in the network. Transmit power of IoT devices is set to 25 dBm. A distance between IoT and FAP is considered between 10 - 15 m. Like \cite{hatoum2014cluster, zirwas2018sub} we also considered that a PRB can carry 504 bits of data at 64 QAP modulation scheme. The network model is considered as the same \cite{gu2018joint, hasan2014distributed}. The noise power -114 dBm. The required latency, data size, and corresponding CPU cycles are determined by specific device types. The total latency is considered as the sum of uplink latency, execution latency, and downlink latency. The total latency requirement of each IoT device is randomly distributed within [0.1 - 5] minutes. We have considered CPU frequency of FAP processor as 1.4 GHz and computational complexity of the task as 10 computation cycles/bit \cite{vu2018joint}. For the sake of simplicity we have assumed the response time i.e., downlink latency as a random variable $\gamma_{k,i}(\boldsymbol{y}_{k,i})=\delta t$, $\delta t \in [0,1]$ for any IoT device \cite{gu2018joint}.

\subsection{Utility as a Surrogate to Task Throughput}
In our results, we assume that a task generated by an IoT device will need one PRB to get it executed. Therefore, the total number of tasks served by our proposed algorithm (a.k.a. task throughput) is equal to the total number of PRBs assigned by the BS. In other words, the fraction of assigned PRBs to the total PRBs demanded by IoT device is a monotonically increasing function of the total number of tasks served by the proposed algorithm. As a result, we evaluate the performance of the proposed algorithm by determining utility based resource allocation in three case scenarios of an IoT framework. In the field of network communication, the utility of resources determines the ratio between the amount of the PRBs that are allocated to IoT devices to number of required PRBS of that device i.e., $Utility_i= \displaystyle \frac{1}{d_i}\sum_{n \in \mathbb{N}} y_{k,i}^n$, for any $F_k \in \mathbb{F}$. In the below, we define the utility of higher priority and lower priority IoT devices based on formulated Problem (P1).
 \begin{equation}\label{eq17}
Utility_i =
\begin{cases}
1, & \text{if $w_i=1$ } 
\\
\leq 1, & \text{otherwise}.
\end{cases}
\end{equation}
  \begin{figure}[h!]
\vspace{-3ex}
	\centering
	\includegraphics[height=3.3cm, width=9.5cm]{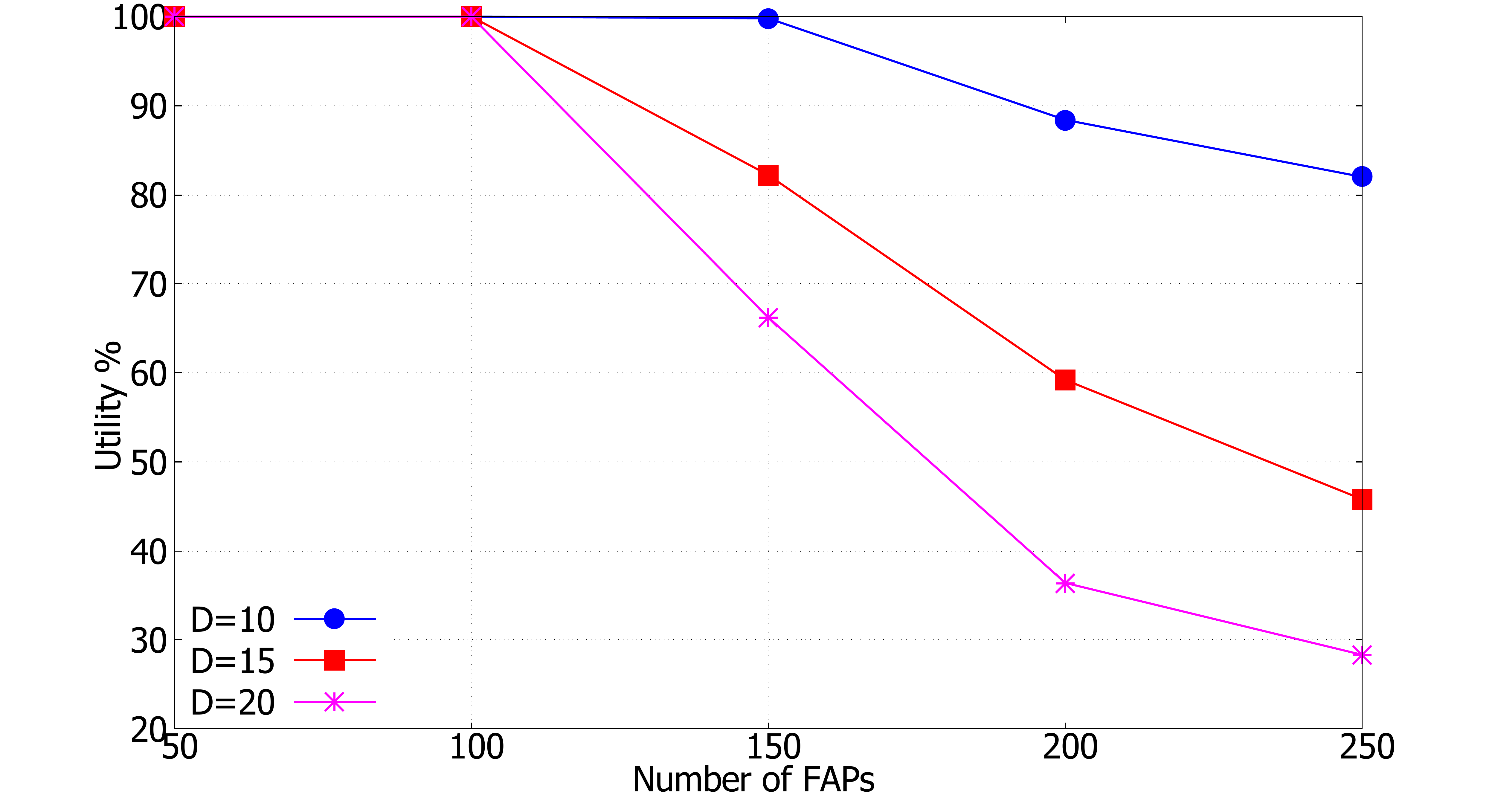}
	\vspace{-3ex}
	\caption{Comparison between number of IoT devices and utility value.}
	\label{re1}
\end{figure}
{\textit {A) Impact of IoT devices on utility:}} For an experimental analysis, a set of FAPs ranging from 50 to 250 was chosen. For each set of FAPs, the proposed algorithm is estimated to obtain utility by setting the maximum demand of PRBs i.e., $D$ from 10 to 20. The utility is calculated as an average value. Comparison of utility with respect to FAP is shown in Fig.\ref{re1}. From the graph, we can see that, for a specific demand of resources when there is an increase in total PRB demand, there is a gradual decrease in utility with a sudden fluctuation as number of FAPs approach to 250. This is because as the number of IoT devices increase, there is a throttle for resources in the network leading to a performance decline of the network and hence lower utility value. Additionally, as the demand for resources increase for the same number of FAPs, utility of network resources decrease due to scarcity of resources to meet a higher demand. 

{\textit {B) Impact of interference link density on utility :}} Fig. \ref{re2} demonstrates the graph for utility v/s interference of link density. Link density is defined as a ratio between actual edges in interference graph to maximum possible edges. This is performed to study impact of network interference on utility of resources among network devices. Utility is different for different link density values. 
\begin{figure}
	\centering
	\includegraphics[height=3.3cm, width=9.5cm]{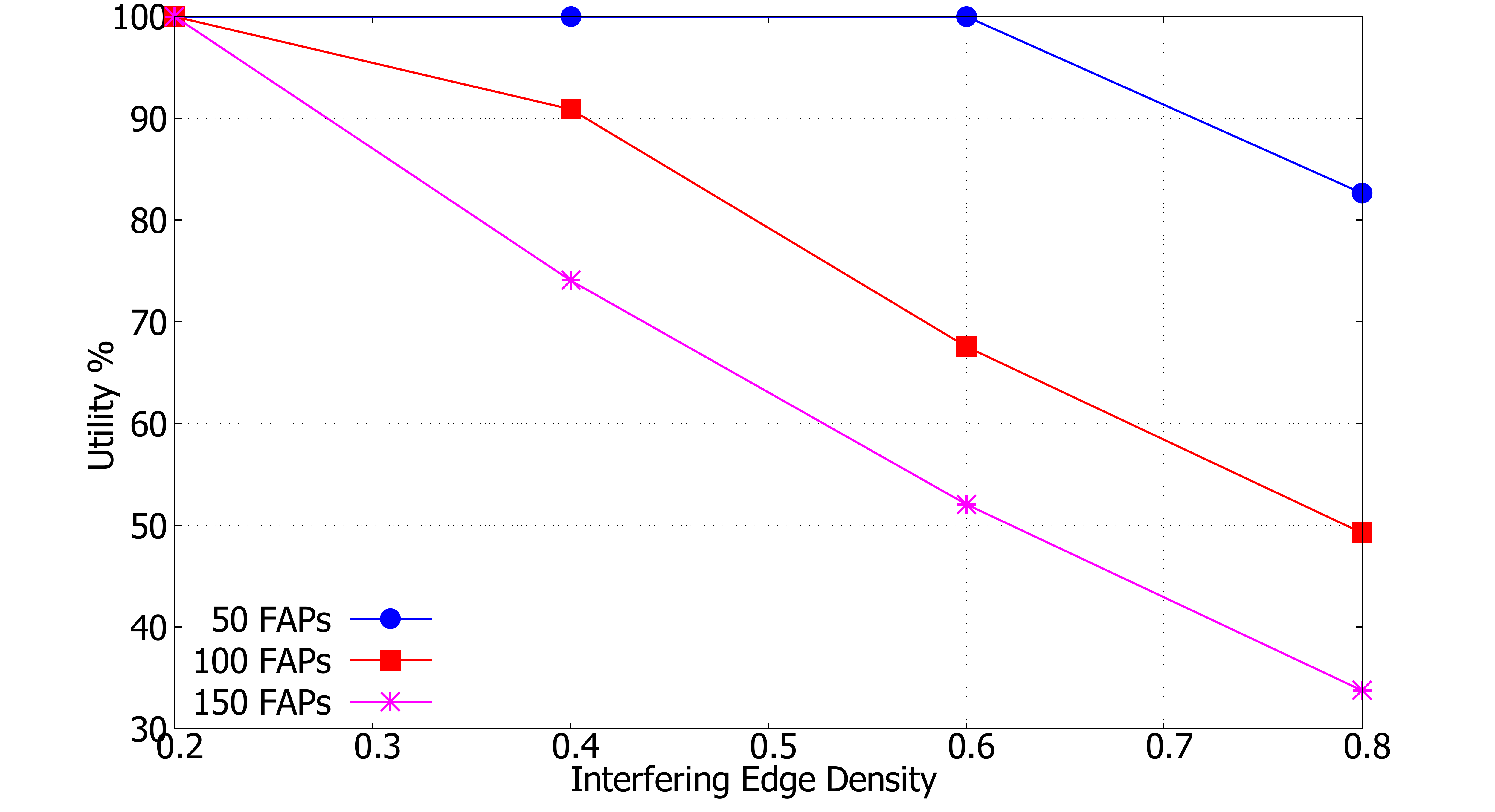}
	\vspace{-3ex}
	\caption{Comparison between interference link densities and utility value.}
	\vspace{-3ex}
	\label{re2}
\end{figure}
From the result, we can observe that as link density increases, resource utility starts decreasing gradually. This is because for a given set of FAPs when there is a higher link density there is stronger interference in wireless link that lead to lower resources re-usability. This causes decline in network resource utility of devices.
\begin{figure}[h!]
	\centering
	\includegraphics[height=3.3cm, width=9.5cm]{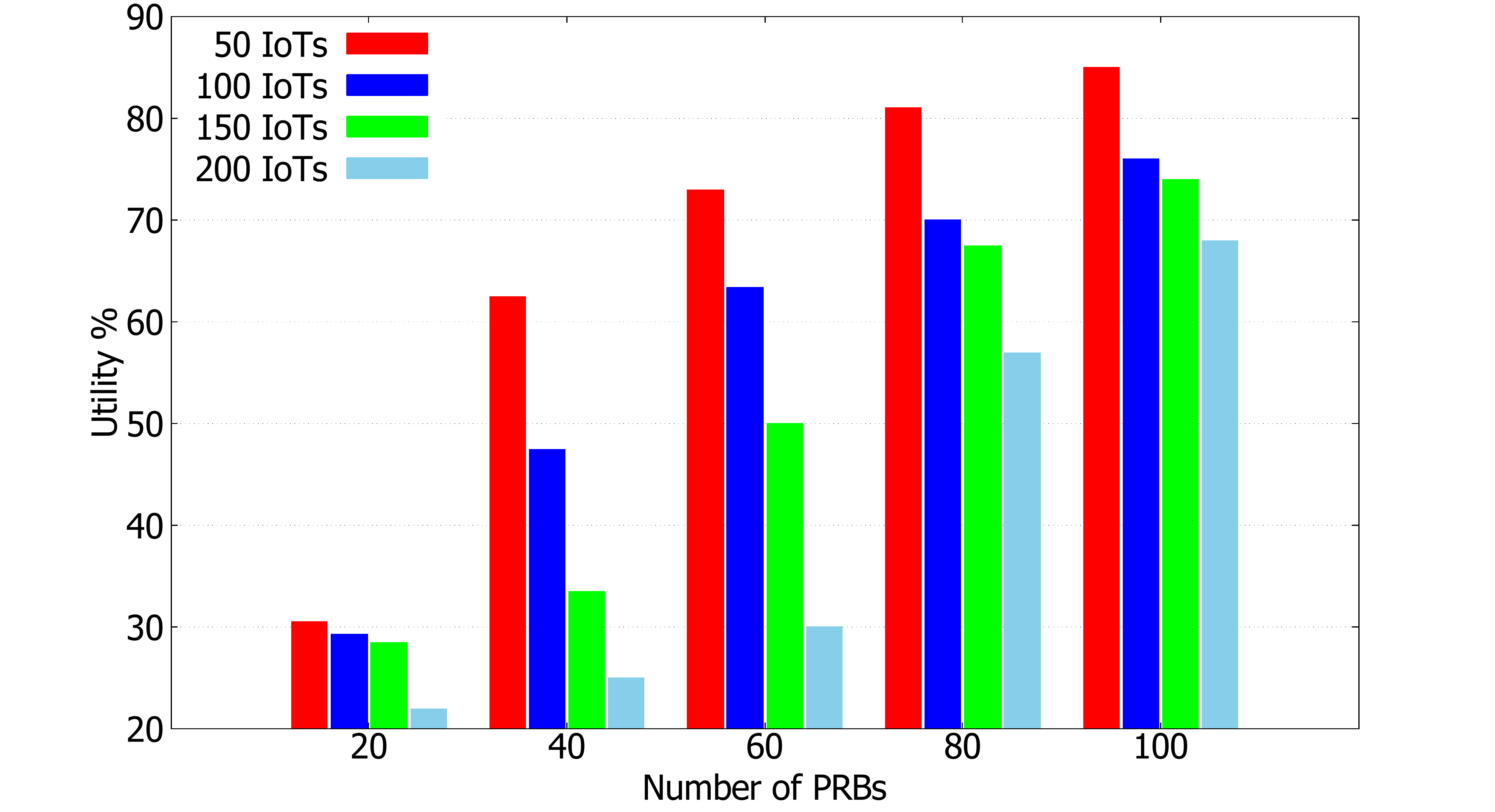}
	\caption{Comparison between utility and number of resources.}
	\label{re3}
\end{figure}

{\textit {C) Impact of number of allocated resources on utility:}} In Fig. \ref{re3}, we compare the utilities along with an increment of the number of allocated PRBs. For a given set of IoT devices, as the number of allocated resources in the network increase, there is a sharp increase in the utility as now more resources can be assigned to each IoT devices. If the number of allocated resources are small, the devices will not be serviced as per their demand due to interference constraint and hence, smaller utility. However, if the total resources in the network are sufficient, the effective resources allocated to each device is higher, which results in higher utility for the network. Additionally, it is evident from the graph that as the number of IoT devices increase the utility will decrease for the given amount of resources. This is because for a specific amount of allocated resources, as the number of devices in the network increase, more subgroups of non-interfering devices will be formed that can lead to a division of resources into more number of portions. Due to this, some devices may fall short of meeting their minimum resource requirement which causes a drop in the utility of the network resources.  

\subsection{Latency Evaluation}
In this sub-section, we present a prototype of the proposed architecture to compare the performance of the proposed scheme. Further, we compare the total latency of the developed prototype with the obtained numerical results. We assumed that two PRBs are dedicated to each device in order to transmit the data to respective FAP. 
\subsubsection{Prototype Specification} To create a prototype of the proposed model, we use a laptop, an android mobile phone, and one raspberry pi as IoT devices sending data over Wi-Fi to another raspberry pi which is working as a FAP in our model. \begin{figure}[h!]
\vspace{-2ex}
	\centering
	\includegraphics[height=3.4cm, width=9.0cm]{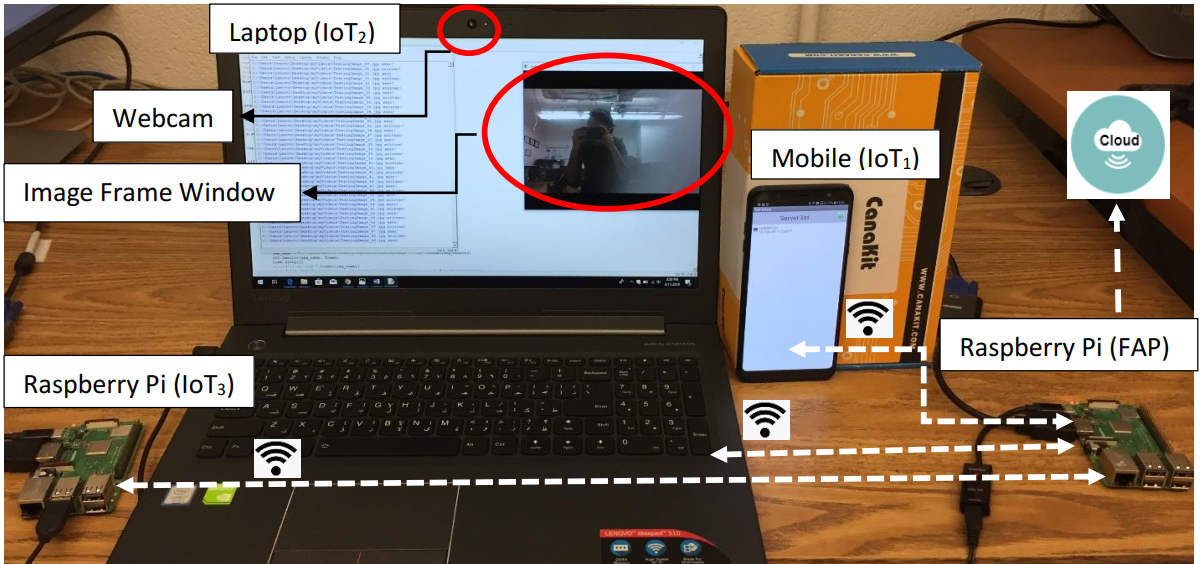}
	\vspace{-3ex}
	\caption{Prototype setup with mobile, laptop, raspberry pi and cloud with their communication medium.}
	\vspace{-1ex}
	\label{Type}
\end{figure}
The specification of each device used in the prototype model is shown in Table \ref{table11}. The camera enabled laptop takes photos in every interval of 1 ms and forwards the images to the central FAP. Another raspberry pi working as IoT device continuously sends stored images to the raspberry pi working as FAP. Additionally, the android mobile phone sends some continuous text messages through a client application to the FAP raspberry pi over the Wi-Fi network. On the FAP side, upon receiving the data from all three IoT devices, the raspberry pi FAP performs local data processing and once the execution is completed it sends response messages to each respective IoT device.  
\begin{table}[h!]
\vspace{-4ex}
	\begin{center}
		\caption{Hardware specification of prototype.}
		\label{table11}
		\begin{tabular}{|c|c|} 
		\hline
			\textbf{Devices} & \textbf{Specification} \\			
			\hline
			Raspberry Pi  &  \begin{tabular}[c]{@{}l@{}}
			\text{\textbf{Model:} Raspberry Pi 3 B+}, \\
			\text{\textbf{SOC:} Broadcom BCM2837B0,} \\
			\text{Cortex-A53 (ARMv8) 64-bit SoC.}\\
			\text{\textbf{CPU:} 1.4GHz 64-bit quad-core,} \\
			\text{ARM Cortex-A53 CPU,} \\
			\text{\textbf{RAM:} 1GB SDRAM,} \\
			\text{\textbf{WiFi:} Dual-band 802.11ac wireless LAN,} \\
			\text{(2.4GHz and 5GHz ) Bluetooth 4.2.} \\
			\text{\textbf{Power:}  5V/2.5A DC power input.} 		\end{tabular}\\
			\hline 
            Laptop  &  \begin{tabular}[c]{@{}l@{}}
			\text{\textbf{Model:} Lenovo Ideapad 510.} \\
			\text{\textbf{Processor:} Core i5 7th Gen.} \\
			\text{\textbf{CPU:} 2.5 Ghz.} \\
			\text{\textbf{RAM:} 2GB SDRAM,} \\
			\text{\textbf{WiFi:}802.11 a/b/g/n/ac, Bluetooth 4.2} \\
			\text{\textbf{Web-Cam:} Yes, recording at 720p HD,} \\
			\text{\textbf{Battery:}   Cell Li-Ion.}
		\end{tabular} \\
\hline 
Mobile &   \begin{tabular}[c]{@{}l@{}}
			\text{\textbf{Model:} Samsung Galaxy A6 +.} \\
			\text{\textbf{OS:} Android 8.0 (Oreo).} \\
			\text{\textbf{CPU:} Octa-Core 1.8GHz Cortex A53.} \\
			\text{\textbf{RAM:} 4GB,} \\
			\text{\textbf{WiFi:}8802.11 a/b/g/n, WiFi Direct, } \\
				\text{hotspot and Bluetooth 4.2} \\
			\text{\textbf{Network technology:} GSM/HSPA/LTE.}
		\end{tabular} \\
\hline
		\end{tabular}
	\end{center}
	\vspace{-4ex}
\end{table}
\begin{figure}[h!]
\vspace{-1ex}
	\centering
	\includegraphics[height=3.3cm, width=9.5cm]{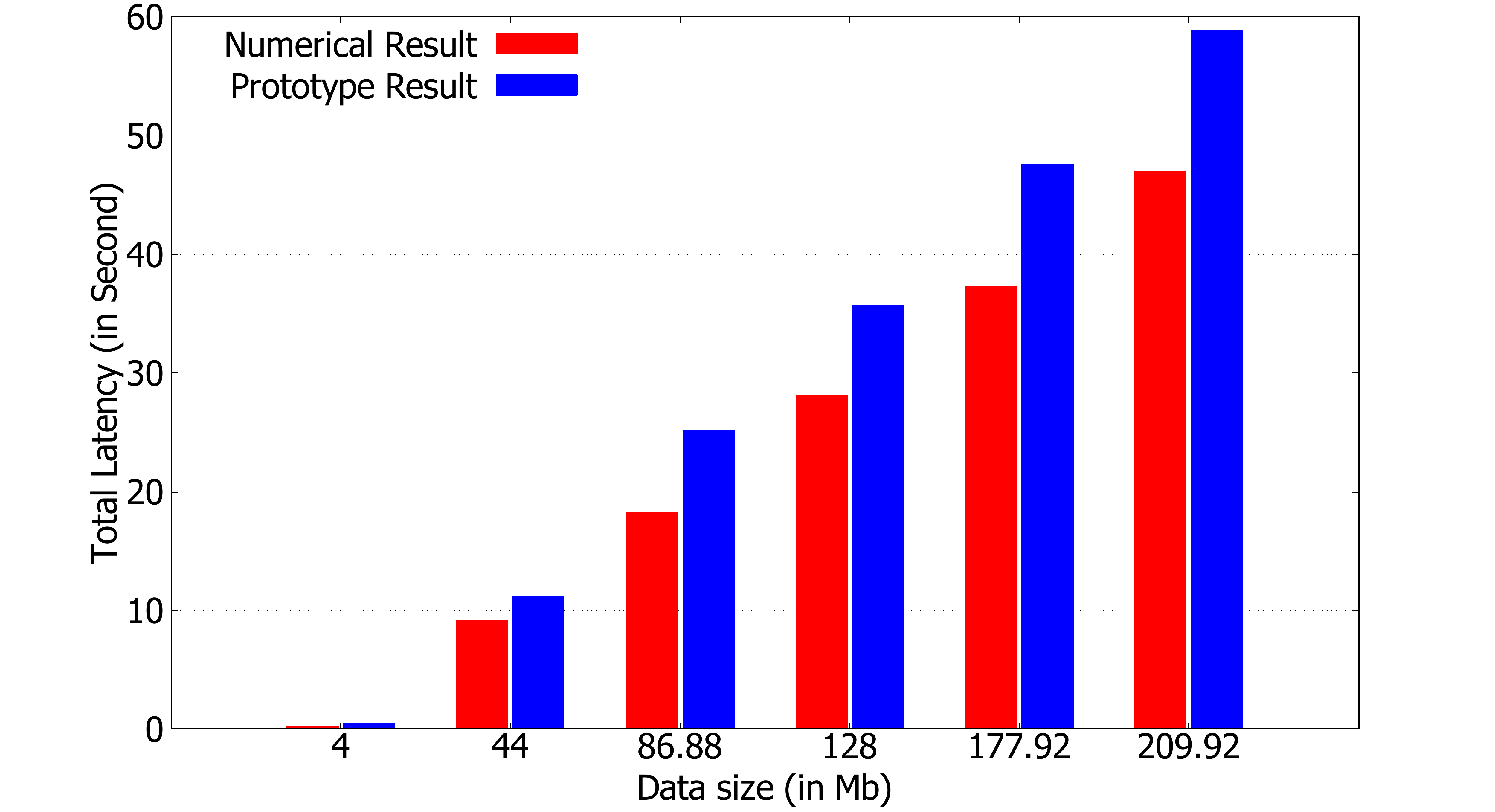}
	\vspace{-3ex}
	\caption{Comparison of total delay and data size on developed prototype and numerical result.}
	\label{Comp}
	\vspace{-1ex}
\end{figure} 
We estimate the proposed scheme on different size of data varying from 4 Mb to 209.92 Mb by assuming the task deadline as 1 minute. With the increase of data size, the total time consumed to execute the task also increases. As from Fig. \ref{Comp} we can observe that the total consumed latency in numerical result and prototype model is almost the same. However, with the increase in data size, the total latency consumed by the prototype model is increasing. The main reason for this difference is that, laptop and smart-phone used as IoT devices are not dedicated devices and these devices are running several other applications in parallel while in use as an IoT device in the prototype model. 

\subsection{Joint PRB and Latency Evaluation}
 To the best of our knowledge, most of the existing work for resource allocation in the IoT-enabled network did not consider the priority of tasks execution keeping limited PRB and latency constraint into consideration. So, in order to compare our proposed scheme with the existing work \cite{vu2018joint}, we draw a random topology of 20-100 number of IoT devices getting services from 5 FAPs. We assumed that each IoT device will send 1 Mb of data to get executed at FAP. We keep the balance coefficient $\alpha=0$ for the work \cite{vu2018joint} in order to compare the latency with our proposed scheme. We can conclude from Fig. \ref{ExCom}, with the increase in the number of PRBs in the network total latency, is minimized. The reason from this observation is that when the more number of PRBs are allocated for the same task, the total achievable data rate goes high and propagation latency becomes low, respectively. Our proposed scheme gives lower latency compared to the existing work. The reason is that unlike the existing work our proposed scheme always try to maximize the re-usability of PRBs while avoiding interference among FAPs and this phenomenon improves the achievable data rate between FAP and IoT device consequently, our proposed scheme results lower latency in the networks.      

\begin{figure}[h!]
\vspace{-3ex}
	\centering
	\includegraphics[height=3.3cm, width=9.0cm]{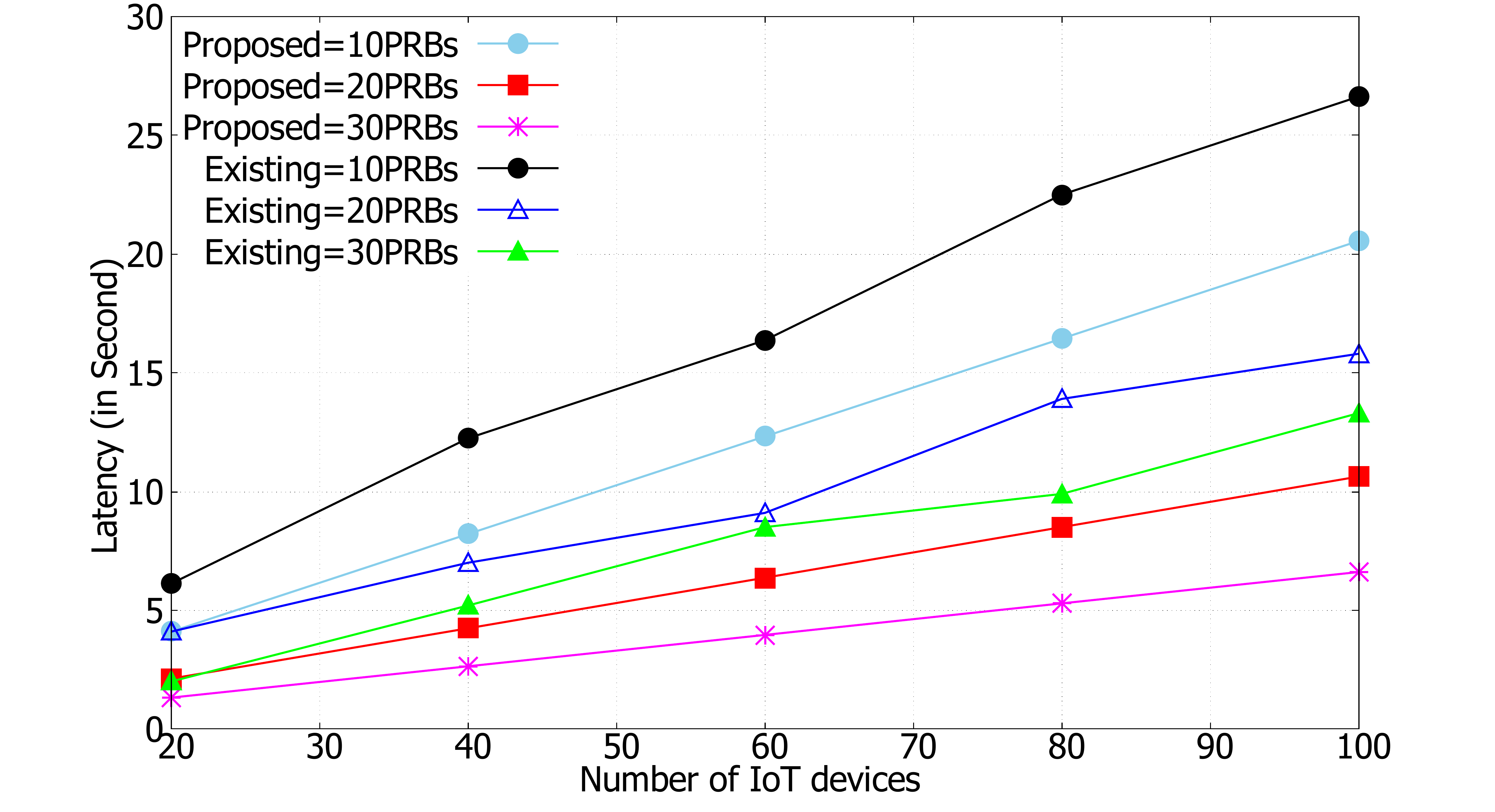}
	\vspace{-3ex}
	\caption{Joint PRB and latency comparison with existing work.}
	\vspace{-3ex}
	\label{ExCom}
\end{figure}

\section{Conclusion}\label{Sec: Conclusion} 
In this work, we proposed a novel framework for 5G networks where the BS identifies appropriate pairs of IoT devices and FAPs and allocate necessary resources to maximize the total number of tasks served. We proposed a graph-coloring based algorithm to solve it in a computationally tractable manner. Through numerical result and prototype model, we have demonstrated the effect of different parameters over the utility and latency of IoT devices in different environments. In the future, the proposed scheme can be applied to specific applications via considering specific applications by modifying the set of constraints accordingly.

\bibliographystyle{ieeetr}
\bibliography{IEEEabrv,bibl}

\end{document}